\documentclass[aps,pra,reprint,twoside]{revtex4-2}
\usepackage[english]{babel}

\usepackage{graphicx}
\usepackage{dcolumn}
\usepackage{bm}
\usepackage{physics}
\usepackage{amsmath, amssymb, amsfonts, amsthm, dsfont}
\usepackage{svg}
\usepackage{color}
\usepackage{url}
\usepackage[
  colorlinks=true,
  linkcolor=blue,   
  citecolor=blue,   
  urlcolor=blue     
]{hyperref}
\usepackage[capitalise]{cleveref}
\usepackage{microtype}
\usepackage[normalem]{ulem}
\usepackage{stmaryrd}
\usepackage{multirow}
\usepackage{booktabs}
\usepackage{thmtools}

\definecolor{darkblue}{RGB}{50,10,180}
\definecolor{orangef}{RGB}{210,100,20}

\newtheorem{Definition}{Definition}

\newtheorem{Proposition}{Proposition}

\begin{document}

\title{Constant sized support state distillation with qubit recycling}

\author{Victor Barizien}
\affiliation{Department of Applied Physics, University of Geneva, Geneva, Switzerland}

\date{\today}

\begin{abstract}
In order to circumvent that no universal set of gates can be achieved on a given quantum error-correcting codes, additional gates can be performed through magic state injection. This however requires to have access to high fidelity magic state, which can be obtained by the distillation of low quality states to fewer higher quality ones. In this paper, we provide a family of distillation protocols able to produce diagonal states at every level of the Clifford hierarchy. This family is obtained recursively using code doubling techniques and leveraging qubit recycling, the fact that some idle qubits of the protocols can be measured and re-used. At a given level of the Clifford hierarchy, we show that the protocols we derive can be performed at arbitrary high distance on a fix number of logical qubits. This family recovers many known efficient protocols, and uncovers new ones, such as a $111 \to 1$ protocol for $\ket{T}$ state distillation at distance $7$, which are the most compact in term of volume. Finally, we discuss possible extension of this framework for distillation protocols producing more than one output state.
\end{abstract}

\maketitle

\section{Introduction}
Quantum error correction is a central ingredient for realizing fault-tolerant quantum computation~\cite{Campbell_2017}. However, error correction alone does not provide a universal set of fault-tolerant logical gates. In particular, the Eastin--Knill theorem~\cite{EastinKnill2009} rules out a universal set of transversal gates for any quantum error-correcting code correcting all local errors. More generally, a number of results constrain the logical gates that can be implemented transversally on a given code~\cite{Zeng08,Pastawski15}. As such, most standard architectures are based on codes that cannot efficiently perform non-Clifford operations, as for the surface code~\cite{Dennis2002, Litinski19b, Fowler19} and the bicycle bivariate code~\cite{Bravyi24}. 

A standard approach to performing these additional operations is through magic state injection~\cite{Bravyi05}, performing the desired gate using an ancillary state. To avoid introducing errors, these magic states need to be of high quality so that no error propagates. Useful quantum computations typically require more than $10^8$ non-Clifford gates, so injected magic states needs to have a error rate below $10^-10$ to keep the computation reliable~\cite{Beverland22}. However, current magic state preparations can only achieve error rate of $10^-3$~\cite{Li15, Lao22}. State distillation provides protocols to circumvent this: consuming several noisy copies of a magic state and, conditioned on the detection of potential errors, producing fewer copies with substantially reduced error probability~\cite{Bravyi05, Campbell2012,Jones2013,Campbell17}. A key parameter of such protocol is its distance $d$, encoding that the output error scales as $p_{\mathrm{out}}=O(p^d)$, where $p$ is the error probability of the input states. In turn, magic state distillation is the bottleneck of fault-tolerant computation and reducing the amount of additional resources of such protocol is thus an important problem~\cite{BravyiHaah2012,Haah17b,Litinski2019}.

State distillation protocols were initially formulated by leveraging another quantum error-correcting codes admitting a transversal implementation of the target gate~\cite{Bravyi05}. For the $T=\sqrt{\sqrt{Z}}$ gate, a prominent example of non-Clifford operation, this led in particular to the study of triorthogonal codes~\cite{BravyiHaah2012}. However, recent works have shown that such distillation protocols can directly be obtained from a binary matrix satisfying appropriate orthogonality conditions~\cite{Litinski2019}. A framework proposed in Ref.~\cite{Jacinto26} maps such $N \times n$ binary matrices to physical implementations of the protocol as quantum circuits on $N$ logical qubits, composed of $n$ Pauli product rotations, one per column of the matrix, each consuming one magic state. 

Several works have investigated how to construct such matrices, including protocols producing multiple output states~\cite{Haah17b, Haah2018, Shi24, Sullivan_2024}. One parameter that has received particular attention is the \emph{depth} of the protocol, corresponding to the number of columns of the matrix, i.e.~to the number of input magic states consumed~\cite{NezamiHaah2021, baldelli2026constructingdecodingquantumtriorthogonal}. While the depth directly determines the number of time steps of distillation factories, the physical resource overhead is also greatly impacted by the number of \emph{support qubits} $N$ the rotations act on~\cite{Jacinto26}. Recent work has shown that \emph{qubit recycling} can substantially reduce the resource requirements of distillation protocols by measuring and re-initializing check qubits midway through the protocol and delaying initialization of the output qubits~\cite{Xu26, Jacinto26}.

In this paper, we tackle the space cost of distillation factories by introducing a family of binary matrices that correspond to distillation protocols of a single output state at arbitrary levels $r$ of the Clifford hierarchy~\cite{Cui17} and at arbitrary distance $d$. Our construction is recursive and based on a canonical procedure known as \emph{code doubling}, or \emph{level lifting}, which constructs higher-order orthogonal matrices from existing ones~\cite{Haah2018,Sullivan_2024}. We show that the support of this family of protocols can be efficiently reduced by qubit recycling, and that for a fixed Clifford level, the number of qubit needed to perform the protocols at arbitrary distance $d$ is constant. This can allow for substantially smaller resource overhead than previously known protocols of similar distance. Finally, we discuss possible extension of this framework to the case of protocols distilling more than one output state.

\section{$r$-orthogonal matrices}
Throughout this letter, we consider matrices over the binary field $\mathbb{F}_2$. Let $G$ be an $s\times n$ binary matrix. We say that $G$ is \emph{$r$-orthogonal}, for $r\geq 2$, if, for any $p\in[2,r]$ distinct rows of $G$, the number of columns in which all $p$ rows have support is a multiple of~$2^{r-p+1}$, i.e.
\begin{equation}
\begin{split}
\forall p\in[2,r],\quad & \forall i_1\neq\cdots\neq i_p\in[s], \\
& \qquad
\sum_{k=1}^{n} G_{i_1,k}\cdots G_{i_p,k} \equiv 0 \pmod{2^{r-p+1}}.
\end{split}
\label{eq:rortho}
\end{equation}
For $r=2$, this condition reduces to pairwise orthogonality of the rows, and one generally speak about self-orthogonal matrices. For $r=3$, one recovers the notion of triorthogonal matrices~\cite{BravyiHaah2012}. We also refer to arbitrary binary matrices as \emph{$1$-orthogonal}, since no constraint is imposed in this case. Up to a permutation of the rows, we write
\begin{equation}
G =\begin{bmatrix}
    G_1 \\ \hline G_0
\end{bmatrix}, 
\end{equation}
where the rows of $G_1$ and $G_0$ have odd and even Hamming weight respectively.

An $r$-orthogonal matrix naturally defines a magic-state distillation protocol producing $|\operatorname{rows}(G_1)|$ states. Each output state is
$\ket{(Z)^{\alpha_i/2^{(r-1)}}}$, defined as the $+1$ eigenstate of the operator $(Z)^{\alpha_i/2^{(r-1)}}$, where $\alpha_i$ is the Hamming weight of the $i$-th row of $G_1$. As derived in Ref.~\cite{Jacinto26}, given an $r$-orthogonal matrix $G$, the corresponding protocol acts on $s$ data qubits and consumes $n$ input magic states. For each column $i$, one applies the generalized rotation
\begin{equation}
R_{\pi/2^r}^{Z^{G_{\cdot,i}}} = \cos\left(\frac{\pi}{2^r}\right) I + \sin\left(\frac{\pi}{2^r}\right) \bigotimes_{j=1}^{s} Z_j^{G_{j,i}},
\label{eq:generalized_rotation}
\end{equation}
on the support specified by that column. Each such rotation can be implemented by injecting a single $\ket{(Z)^{1/2^{(r-1)}}}$ state. The even-weight rows, corresponding to $G_0$, are subsequently measured in the $Z$ basis and serve as \emph{check qubits}. Conditioned on the accepted measurement outcomes, the protocol implements a $(Z)^{\alpha/2^{(r-1)}}$ gate on each qubit associated with an odd-weight row of $G_1$, with $\alpha$ the Hamming weight of the row. Hence, the matrix $G$ defines a distillation procedure with
\begin{equation}
n(G)=n
\qquad\text{and}\qquad
k(G)=|\operatorname{rows}(G_1)|,
\end{equation}
where $n(G)$ is the number of input states (or \emph{depth}) and $k(G)$ is the number of output states. 

A distillation protocols is characterized by that fact that it suppresses error as $p^d$ for input noisy magic with error rate $p$ and for a distance $d$. Note that the distillation protocol presented above only account for $Z$ type errors. However, Ref.~\cite{Jacinto26} showed that when the underlying QECC code is a stabilizer code, it is sufficient to consider $Z$-type errors only for distillation of such states, since an $X$-type error on $Z$ diagonal states is mapped probabilistically to either no error or a $Z$-type error when measuring a $Z$ type stabilizer on the underlying QECC code. Secondly, the error model corresponding the implementation of $R_{\pi/2^r}^{Z^{G_{\cdot,i}}}$ gates thought state injection is a $Z$ error on each of its supporting qubit~\cite{Litinski2019}. We thus define $d(G)$ as the minimum number of faulty generalized rotations required to produce a $Z$ error on at least one output qubit while remaining undetected by all check-qubit measurements. This quantity can thus be expressed directly in terms of $G$: $d(G)$ is the minimum number of columns whose binary sum has zero support on the rows of $G_0$ and nonzero support on at least one row of $G_1$. The protocol therefore maps a matrix $G$ to a distillation procedure consuming $n(G)$ states $\ket{(Z)^{1/2^{(r-1)}}}$ with error $p$ to produce $k(G)$ states $\ket{(Z)^{\alpha/2^{(r-1)}}}$ with error scaling as $p^{d(G)}$ to leading order~\footnote{The prefactor of the $p^{d(G)}$ term can be computed using the MacWilliams identity, as mentioned in~\cite{BravyiHaah2012}}. We further denote by $s(G)=s$ the number of rows of $G$, which determines the number of \emph{support qubits} required to implement the protocol. In particular, the qubit volume overhead~\cite{Litinski2019} is given by
\begin{equation}
\mathcal{O}(G)=s(G)\cdot n(G).
\end{equation}
We thus associate to every $r$-orthogonal matrix $G$ an
$[n(G),k(G),d(G),s(G)]$ distillation procedure of
$\ket{(Z)^{\alpha/2^{(r-1)}}}$ states. 

Of particular interest are self-orthogonal matrices ($r~=~2$) enabling distillation of $\ket{S}=\ket{Z^{1/2}}$ states, which cannot be implemented efficiently on error-correcting codes tailored to biased-noise architecture~\cite{Barizien26}, such as ones based on cat qubits~\cite{Gouzien23}. Furthermore, triorthogonal matrices ($r=3$) lead to the distillation of the $\ket{T}:=\ket{Z^{1/4}}$ magic state~\cite{BravyiHaah2012}, key to most two-dimensional architectures~\cite{Fowler19,Bravyi24}. In this case, it is natural to assume that Clifford operations are available at negligible cost at the logical level, allowing for additional Clifford corrections within the distillation protocol~\cite{Litinski2019, NezamiHaah2021}. 

More generally, when distilling $\ket{(Z)^{\alpha/2^{(r-1)}}}$ states for odd $\alpha$, which belong to the $r$-th level of the Clifford hierarchy~\cite{GottesmanChuang1999, Cui17}, one may assume that operations from lower levels of the hierarchy are freely available, either because they can be implemented directly at the logical level or because the corresponding magic states can be distilled at lower cost. Equivalently, one may assume access to high-quality $(Z)^{1/2^{(r-2)}}$ gates. This additional freedom allows the orthogonality constraints in Eq.~\eqref{eq:rortho} to be relaxed, providing additional corrections are later performed. We therefore introduce \emph{weakly $r$-orthogonal} matrices, for which the overlaps in Eq.~\eqref{eq:rortho} reduce to modulo $2$ instead:
\begin{equation}
\begin{split}
\forall p\in[2,r],\quad
& \forall i_1\neq\cdots\neq i_p\in[s],\\
& \qquad
\sum_{k=1}^{n}
G_{i_1,k}\cdots G_{i_p,k}
\equiv 0 \pmod{2}.
\end{split}
\label{eq:weakrortho}
\end{equation}
In this framework, the states that are distilled can all be mapped, up to lower level corrections, to the canonical state $\ket{(Z)^{1/2^{(r-1)}}}$. We note that the terminology \emph{weakly orthogonal} is not universal; in particular, related works use the term to refer specifically to orthogonality conditions that allow for Clifford corrections only~\cite{Jacinto26}.

\section{Code doubling}
A general question regarding $r$-orthogonal matrices is how to construct such matrices at arbitrary distance. Previous work have shown canonical families exists, such as ones based on shortened Reed-Muller codes~\cite{Landahl13}, distilling transversal gates high in the Clifford hierarchy at any $d$. In the following, we tackle this question for matrices with a single output, i.e~$k(G)=1$. We present a general construction, to obtain higher distances codes based on existing ones. This construction is based on \emph{code doubling}, as introduced in Refs.~\cite{BravyiCross2015,Haah2018}, and recalled in the Theorem below.

\begin{restatable}{Theorem}{THMOne} \label{thm:code_doubling}
    Consider a weakly $r$-orthogonal matrix $G = \begin{bmatrix}
    G_1 \\ \hline G_0
\end{bmatrix}$ with a single output ($k(G)=1$) of distance $d$ and a weakly ($r-1$)-orthogonal matrix $H=\begin{bmatrix}
    \mathbf{1} \\ \hline H_0
\end{bmatrix}$ of odd depth $n(H)$ and distance $d+2$. Then the following matrix is weakly $r$-orthogonal 
\begin{equation}
    G' = \begin{bmatrix}
        G_1 & H_1 & H_1 \\ \hline
        G_1 & \mathbf{0} & \mathbf{1} \\
        G_0 & 0 & 0 \\ 
        0 & H_0 & H_0
    \end{bmatrix}
\end{equation}
and has distance $d+2$, where $0$ and $1$ are all respectively all zeros and all ones matrices of appropriate size, and the bold font is used for single row matrices. 
\end{restatable}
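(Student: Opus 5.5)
The plan is to verify the three claims directly from the block structure of $G'$: weak $r$-orthogonality, a single output ($k(G')=1$), and distance exactly $d+2$. I write a column vector of $G'$ as $x=(x_G,x_A,x_B)$, with one part per column block. I name the rows of $G'$ as follows: $t=(g_1,\mathbf{1},\mathbf{1})$, where $g_1$ is the single row of $G_1$; $c=(g_1,\mathbf{0},\mathbf{1})$; $(g_0,0,0)$ for each row $g_0$ of $G_0$; and $(0,h_0,h_0)$ for each row $h_0$ of $H_0$. Throughout I use that $H_1=\mathbf{1}$, and that $n(H)$ and $|g_1|$ are odd.

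\emph{Row parities.} First I check which rows have odd weight.
\begin{itemize}
\item $|t|=|g_1|+2n(H)$ is odd.
\item $|c|=|g_1|+n(H)$ is even, being a sum of two odd numbers.
\item Rows of type $(g_0,0,0)$ are even because the rows of $G_0$ are even.
\item Rows of type $(0,h_0,h_0)$ have doubled weight, hence are even.
\end{itemize}
So only $t$ is odd, and $k(G')=1$.

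\emph{Weak orthogonality.} Take $p\in[2,r]$ distinct rows and do a case analysis on which row types occur.
\begin{itemize}
\item If the set contains both a $G_0$-type row and an $H_0$-type row, the entrywise product is zero.
\item If only rows among $t$, $c$ and $G_0$-type rows occur, and at least one $G_0$-type row is present, then the $A,B$ blocks vanish. The first block is the product of at least two and at most $r$ distinct rows of $G$ (both $t$ and $c$ restrict to $g_1$, and $g_1\cdot g_1=g_1$). This product is even by weak $r$-orthogonality of $G$.
\item If the set is exactly $\{t,c\}$, the product is $c$ itself, which is even.
\item If $H_0$-type rows occur without $c$, the $A$ and $B$ blocks are identical, so the weight is doubled. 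This holds whether or not $t$ is present, since $t$ acts as all ones on $A,B$.
\item If $c$ and $q\ge1$ rows $h_0$ occur, the product is supported on block $B$ only and equals the product of those $q$ rows of $H$. For $q=1$ this is $|h_0|$, which is even. For $2\le q\le r-1$ it is even by weak $(r-1)$-orthogonality of $H$. The bound $q\le r-1$ holds because $c$ already uses one of the at most $r$ rows.
\end{itemize}

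\emph{Distance.} A fault pattern $x$ is undetected and harmful exactly when it meets four conditions:
\begin{itemize}
\item $G_0x_G=0$;
\item $H_0(x_A+x_B)=0$;
\item $g_1\cdot x_G+|x_B|\equiv0$ (the $c$ check);
\item $g_1\cdot x_G+|x_A|+|x_B|\equiv1$ (the $t$ output).
\end{itemize}
Subtracting the third condition from the fourth gives $|x_A|$ odd, so $x_A\neq0$. I then split into two cases.
\begin{itemize}
\item If $g_1\cdot x_G=1$, then $x_G$ is a logical error of $G$, so $|x_G|\ge d$. Also $|x_B|$ is odd, so $x_B\neq0$. Together with $x_A\neq0$, the total weight is at least $d+2$.
\item If $g_1\cdot x_G=0$, then $y=x_A+x_B$ satisfies $H_0y=0$ and has $\mathbf{1}\cdot y=|x_A|+|x_B|$ odd. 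So $y$ is a logical error of $H$, and $|x_A|+|x_B|\ge|y|\ge d+2$.
\end{itemize}
For tightness, take a minimum-weight logical error $y$ of $H$, which has weight $d+2$ and odd parity on the all-ones row, and set $x=(0,y,0)$. It satisfies all four conditions, so $d(G')=d+2$ exactly.

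The main obstacle is the orthogonality case analysis, specifically the mixed products involving $c$ and $H_0$-type rows. This is the one place where the counting $q\le r-1$ and the parity of $|h_0|$ must be used carefully. It is also the step that dictates why $H$ need only be $(r-1)$-orthogonal, and why $n(H)$ must be odd (so that $|c|$ is even and $|t|$ is odd).
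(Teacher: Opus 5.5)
Your proof is correct and follows the paper's route: row parities, a case analysis of the star products over which row types occur (organized differently from the paper's four cases but equivalent), and a distance argument that separates a logical error of $G$ in the first column block from patterns in the last two blocks whose sum $x_A+x_B$ is a logical error of $H$. Your distance step is tighter than the paper's. You split on $g_1\cdot x_G$ instead of on whether the first block is error-free, and you read $|x_A|$ odd directly off the $t$ and $c$ conditions. This makes the case analysis exhaustive, including the mixed patterns with $x_G\neq 0$, $G_0x_G=0$ and $g_1\cdot x_G=0$, which the paper's informal argument does not explicitly treat.
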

\noindent The proof of this result, which can be found in Appendix~\ref{app:thm1}, relies on the orthogonality properties of $G$ and $H$, together with the particular structure of the construction. In particular, the two copies of $G_1$ ensure that the first two rows retains the required $r$-orthogonality conditions, while the $(r-1)$-orthogonality of $H$ guarantees that $r$-orthogonality is satisfied with the additional $\bm 1$ row. The distance of the constructed matrix is increased to $d+2$ as any error pattern of $G$ would be detected by the additional $G_1$ check qubit unless two additional faulty columns happen later.

While the distance is increased by two by the code doubling procedures, it comes at the cost of increasing the support and depth of the matrix. Indeed, for the new matrix G', these quantities read:
\begin{equation} \label{eq:recursive_param}
\begin{split}
    & s(G') = s(G)+1+s(H_0), \\ 
    & n(G') = n(G)+2n(H).
\end{split}
\end{equation}

\section{Recursive family construction}
We now use the code-doubling construction to recursively generate families of $r$-orthogonal matrices with increasing distance. We first introduce a set of elementary matrices that serves as the base for the recursion. More precisely, we introduce the following $1$-orthogonal matrices:
\begin{equation}
    G^{(1,d)} = \begin{bmatrix}
        \mathbf{1} \\ \hline
        G_0^{(1,d)}
    \end{bmatrix}
\end{equation}
with $G_0^{(1,d)}$ a $(d-1)\times d$ binary matrix such that $G_0^{(1,d)}[i,i]=G_0^{(1,d)}[i,i+1]=1$ for $i\in\{1,\dots,d-1\}$ and all other terms are $0$. For odd distances $d=2m+1$, the depth of the matrix is odd, and the first line sums to $1$ while all other line have exactly two terms $1$ and sum to zero. The only way to get an error on the output qubit is to have an error at each column and thus the distance of such a code is indeed $d$. 

Starting from these matrices, we recursively construct a two-parameter family of $r$-orthogonal matrices. For $r\geq1$ and odd $d$, we define
\begin{equation} \label{eq:can_family_k1}
    G^{(r+1,d+2)} = \begin{bmatrix}
        G^{(r+1,d)}_1  & G^{(r,d+2)}_1 & G^{(r,d+2)}_1 \\ \hline
        G^{(r+1,d)}_1  & \mathbf{0} & \mathbf{1} \\
        G^{(r+1,d)}_0 & 0 & 0 \\ 
        0 & G^{(r,d+2)}_0 & G^{(r,d+2)}_0
    \end{bmatrix},
\end{equation}
where we initialize the recursive process by taking $G^{(r,1)}:=[1]$. Since we apply the code-doubling construction at each step, Theorem~\ref{thm:code_doubling} guarantees that the matrix $G^{(r,d)}$ is $r$-orthogonal, of distance $d$, with a single output. Note that at each step it indeed holds that $G^{(r,d)}_1 = \mathbf{1}$.

The recursive construction provides an explicit family of increasing-distance distillation protocols. Its resource requirements are determined by the number of columns and rows at each recursion step. In particular, the depth and support satisfy recursion relations obtained directly from Eq.~\eqref{eq:recursive_param}, which can be solved to give closed-form expressions for fixed $r$. In general, the parameters $s(G^{(r,d)}) $ and $n(G^{(r,d)})$ of the associated protocol scale as $d^r$.

Some interesting codes appear when looking at small parameter values of $r$ and $d$. For $r=2$, the construction gives the following protocols for $\ket{S}:=\ket{\sqrt{Z}}$ distillation: a $[7,1,3,4]$ protocol, corresponding to the Steane code~\cite{Steane96}; a $[17,1,5,9]$ protocol, corresponding to a 2D self-dual color code~\cite{Bombin_2006}; a $[31,1,7,16]$ protocol, corresponding to a quantum color code~\cite{Bombin_2006}, or equivalently to the generator matrix of some shortened classical Reed-Muller code~\cite{Landahl13}. 
For $r=3$, we obtain the following protocols for $\ket{T}:=\ket{\sqrt{\sqrt{Z}}}$ distillation: a $[15,1,3,5]$ protocol, corresponding to the smallest quantum Reed-Muller code~\cite{Bravyi05, Litinski2019}; a $[49,1,5,14]$ protocol, already reported in~\cite{BravyiHaah2012}. To our knowledge, the higher distances protocols, such as the $[111,1,7,30]$ protocol, were not reported before. These are compared to the results of other small codes known in the literature in~\Cref{tab:table1}.

More generally, in the case of triorthogonal matrices ($r=3$), the codes parameters scale as $n(G^{(3,d)}) = (d^3+6d^2+5d-6)/6$ and $s(G^{(3,d)}) = (d+1)(d+2)(d+3)/24$. Note that they may be suboptimal, as for instance, in terms of depth, one could use shorter $2$-orthogonal codes in the code doubling construction. For example, for $d=5$, there exists a $2$-orthogonal matrix, corresponding to the Golay code~\cite{Steane96}, which only has depth $23$. This could be use to produce a new branch of tri-orthogonal codes based on the doubling construction~\cite{Sullivan_2024}. This way, a $\ket{T}$ protocol of depth $95$ can be built. However, both the depth and the support of the matrix need to be considered for the overhead of such protocols.

\begin{table*}[] 
    \centering 
    \renewcommand{\arraystretch}{1.25} \setlength{\tabcolsep}{10pt} 
    \begin{tabular}{@{}ccccccc@{}} \toprule  \textbf{Matrix} & \textbf{Ref.} & \textbf{Support} $s$ & \textbf{Recycled} $s^\star$ & \textbf{Depth} $n$ & \textbf{Effective overhead} $s^\star \cdot n$ & \textbf{Distance} $d$ \\ 
    \midrule \midrule
    \multicolumn{6}{c}{$r=2$} \\
    \midrule $G^{(2,3)}$ & \cite{Steane96} & 4 & 3 & 7 & 21 & 3 \\$G^{(2,5)}$ & \cite{Bombin_2006} & 9 & 4 & 17 & 68 & 5 \\ $G^{(2,7)}$ & \cite{Bombin_2006} & 16 & \textbf{$\leq$ 4} & 31 & \textbf{$\leq$ 124} & 7 \\
    -- & \cite{Steane96, Jain_2025} & 12 & $\leq^{\star}$ 11 & 23 & $\leq$ 253 & 7\\
    $G^{(2,9)}$ & This work & \textbf{25} & \textbf{$\leq$ 4} & \textbf{49} & \textbf{$\leq$ 196} & \textbf{9} \\
    -- & \cite{Jain_2025} & 23 & $\leq$ 23 & 49 & $\leq$ 1127 & 9 \\
    \midrule 
    \multicolumn{6}{c}{$r=3$} \\
    \midrule $G^{(3,3)}$ & \cite{BravyiHaah2012} & 5 & 4 & 15 & 60 & 3 \\ $G^{(3,5)}$ & \cite{BravyiHaah2012} & 14 & 5 & 49 & 245 & 5 \\ $G^{(3,7)}$ & This work & \textbf{30} & \textbf{$\leq$ 6} & \textbf{111} & \textbf{$\leq$ 666} & \textbf{7} \\
    -- & \cite{Sullivan_2024} & 26 & $\leq^{\star}$ 11 & 95 & $\leq$ 1045 & 7\\
    $G^{(3,9)}$ & This work & \textbf{55} & \textbf{$\leq$ 6} & \textbf{209} & \textbf{$\leq$ 1254} & \textbf{9} \\
    -- & \cite{Jain_2025} & 49 & $\leq$ 23 & 185 & $\leq$ 4255 & 9 \\
    \bottomrule 
    \end{tabular}
    \caption{Parameters of the 1-output distillation protocols from the recursive family $G^{(r,d)}$ for small values of $r$ and $d$, and comparison with other small protocols known in the literature. For the recycled qubit value $s^ \star$, an exact value means that is was certified optimal using the numerical method presented in Ref.~\cite{Jacinto26} ; upper bounds are obtained either because of the code doubling construction of due to numerical exploration (indicated by a $^\star$), using the search algorithm of Ref.~\cite{Jacinto26}.}
    \label{tab:table1}
\end{table*}

\section{Qubit recycling}
It may happen that not all qubits are simultaneously in-use during the whole duration of distillation protocol. Such property can be leverage to reduced the effective qubit support of a given distillation protocol. This idea was recently formalized as \emph{qubit recycling}~\cite{Xu26} noticing that some of the check qubits can be initialized, used to performed some of the rotations, measured and initialized again. Furthermore, the output qubits can sometimes be initialized latter in the sequence. 

However, exploring qubit recycling for a given distillation protocol is not straightforward as many equivalent matrix generating that protocol exist, obtained by operations preserving \emph{r}-orthogonality and distance, namely adding even weighted rows to other rows and column permutation. Each equivalent protocol might admit different qubit support after recycling. In general, computing the minimal recycled qubit support for the equivalence class of a protocol, denoted in the following $s^\star(G)$, is hard, thought recent works have try to tackle this numerically~\cite{Jacinto26}.  

Our recursive construction based on code-doubling however allows to easily compute a qubit recycling reduction at every step of the recursive process. Indeed, one can perform the recursive construction using recycling as following: 
\begin{enumerate}
    \item in Eq.~\eqref{eq:can_family_k1}, adding the second row to the first preserve orthogonality and allows to initialize the output qubits after the first column block,
    \item the check qubits corresponding to the $G_0^{(r+1,d)}$ rows are measured and re-initialized,
    \item the output qubit is initialized and the last two column blocks are performed and measured 
\end{enumerate}
All in all, the recycled procedure can be represented as
\begin{equation} 
    G^{(r+1,d+2)} = \begin{bmatrix}
          & & \mathbf{1} & \mathbf{0} \\ \hline
        \mathbf{1} & &  \mathbf{0} & \mathbf{1} \\
        G^{(r+1,d)}_0 & || &  &  \\ 
        &  & G^{(r,d+2)}_0 & G^{(r,d+2)}_0
    \end{bmatrix},
\end{equation}
where we used the fact that $G^{(r,d)}_1 = \mathbf{1}$, the blank means that the qubits are not used at this time step and the vertical double bar means that the check qubits are measured midway through. Rearranging the lines and filling the blanks, one can alternatively write it as
\begin{equation}
\left[\begin{array}{c||ccc}
        \multirow{2}{*}{$\ \, G^{(r+1,d)}_0 \ $} 
    & & \mathbf{1}
    & \mathbf{0} \\ \cline{3-4} 
    & &  G^{(r,d+2)}_0 & G^{(r,d+2)}_0\\
        \multicolumn{1}{c}{\mathbf{1}} & & 
    \mathbf{0}
    & \mathbf{1}
\end{array}\right]
\end{equation}
With this representation, it becomes clear that: (i) the first column block above can be performed on the same recycled support as $G^{(r+1,d)}$; (ii) the last two column blocks require at most two more support qubit than the recycled support of $G_0^{(r,d+2)}$, which is itself upper bounded by the recycled support of $G^{(r,d+2)}$. 

Now, notice that the trivial code canonical family ($r~=~1$) can always be realized on $3$ qubits as only $2$ of the check qubits need to be active at the same time. Thus, one can recursively show that the protocol corresponding to $G^{(r,d)}$ can be performed on a recycled support 
\begin{equation}
    s^\star(G^{(r,d)}) \leq 2r,
\end{equation}
for all $r\geq 2$. As such, for a fixed level $r$ in the Clifford hierarchy, it is possible to distill states up to $r-1$ Clifford corrections, with arbitrary large distance $d$, on a fixed number of qubits. In the case of $r=3$, our construction allows to only use $6$ qubits to perform $\ket{T}$ state distillation protocols at to arbitrary distance $d$. Note that this construction is only an upper bound on the recycled support. For instance, the $15\to 1$ protocols ($d=3$) can be performed on only $4$ support qubits, while $s=5$ seems to be numerically optimal for the $49\to 1$ protocol ($d=5$)~\cite{Jacinto26}.

This key property makes out codes relevant in terms of effective qubit overhead $s^\star(G)\cdot n(G)$. For instance, for $r=2$, the self-orthognal code with Golay code construction for $d=7$ seems to be incompressible numerically on less than $s^\star=11$ for depth $n=23$, whereas ours has effective support $s^\star=4$ and depth $n=31$. Therefore our code has much smaller effective overhead $4\cdot 31 = 124 < 253 = 11\cdot 23$. The same happen for code that would be constructed by doubling the Golay code. For instance, for $r=3$, the $d=7$ code obtained in Ref.~\cite{Sullivan_2024} of depth $95$ is itself limited by the $11$ support qubit of the Golay code used to construct it. While our $d=7$ code has larger depth $n=111$, it's effective overhead $6\cdot111 = 666 < 1045= 11\cdot 95$ is again much smaller. \\

\section{Doubling in the $k>1$ case}
The restriction to a single output state, $k(G)=1$, is convenient for the recursive construction above, but is not optimal from the point of view of asymptotic distillation efficiency. Indeed, for a matrix with $k$ odd-weight rows, the output-to-overhead ratio is $k/(s\cdot n)$, so increasing $k$ can substantially improve the yield of a distillation protocol. This observation already appears in Ref.~\cite{BravyiHaah2012}, where a family of $d=2$ triorthogonal codes can yield a $\ket{T}$ distillation protocol with parameters $[3k+8,k,2,k+3]$. Asymptotically, the overhead-to-output ratio scale as $1/k$, and these codes therefore provide a simple example in the $d=2$ case where multiple output states lead to a parametrically better protocols than any fixed-$k$ construction. Subsequent works have manage to reduce the output-to-input ratio progressively, down to asymptotically optimal distillation protocols, corresponding to $\log(n/k)/\log(d) =0$~\cite{HaahHastings2018,Wills2025}, but little focus has been shown to the size of the qubit support for these protocols. 

This motivates extending the code-doubling construction to matrices with $k(G)>1$. The underlying algebraic mechanism is in fact not specific to the single-output case, as the doubling construction can be understood as specific case of level lifting, a map which combines an $r$-orthogonal matrix with an $(r+1)$-orthogonal matrix to produce an $(r+1)$-orthogonal matrix~\cite{Haah2018}. However, the key point here is to derive how the distance will behave under such mapping. Here, we show that is it in general possible to increase the distance from odd $d$ to even $d+1$, as stated below.
\begin{restatable}{Theorem}{THMTwo} \label{thm:code_doubling_2}
    Consider a weakly $r$-orthogonal matrix $G = \begin{bmatrix}
    G_1 \\ \hline G_0
\end{bmatrix}$ of odd distance $d$ and a weakly ($r-1$)-orthogonal matrix $H=\begin{bmatrix}
    H_1 \\ \hline H_0 
\end{bmatrix}$ with $k(H) = k(G)$, depth $n(H)$ of the same parity as 
$n(G)$ and distance at least $d+1$. Then the following matrix is weakly $r$-orthogonal 
\begin{equation}
    G' = \begin{bmatrix}
        G_1 & H_1 & H_1 \\ \hline
        \mathbf{1} & \mathbf{0} & \mathbf{1} \\
        G_0 & 0 & 0 \\ 
        0 & H_0 & H_0
    \end{bmatrix}
\end{equation}
and has distance at least $d+1$. The support of the new matrix is $s(G') = s(G) + s(H) - k(G) + 1$ while its depth is $n(G') = n(G) + 2n(H)$. The recycled qubit support of the new protocol can be upper bounded as $s^\star(G') \leq \max(s^\star(G)+1, s^\star(H)+1)$.
\end{restatable}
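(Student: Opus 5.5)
The plan is to check the three claims in turn: weak $r$-orthogonality by counting overlaps, the distance bound by a case analysis on the error pattern, and the support, depth and recycling statements from an explicit schedule. Throughout I write the rows of $G'$ as the $k=k(G)$ \emph{top rows} $[G_1\,|\,H_1\,|\,H_1]$, the new check row $c=[\mathbf{1}\,|\,\mathbf{0}\,|\,\mathbf{1}]$, the rows $[G_0\,|\,0\,|\,0]$ and the rows $[0\,|\,H_0\,|\,H_0]$.

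\textbf{Row parities.} A top row has weight $\mathrm{wt}(G_1\text{ row})+2\,\mathrm{wt}(H_1\text{ row})$, which is odd. The row $c$ has weight $n(G)+n(H)$, which is even because the two depths have the same parity; this is the only place that hypothesis is used. The $G_0$ rows have even weight, and the $H_0$ rows have doubled weight. So the odd/even split of $G'$ is exactly the displayed one, and $k(G')=k(G)$.

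\textbf{Weak orthogonality.} Take $p\in[2,r]$ distinct rows and split their common support over the three column blocks.
\begin{itemize}
\item If $c$ is not among them, the last two blocks contribute the same number twice, hence an even amount. The first block contributes an overlap of $p$ rows of $G$, which is even by weak $r$-orthogonality of $G$, or contributes $0$ as soon as an $H_0$ row is present.
\item If $c$ is among them and $p-1\ge 2$, the remaining $p-1$ rows give an overlap of $p-1\le r-1$ rows of $G$ in block one and of $H$ in block three. Both are even by the hypotheses, or zero when $G_0$ and $H_0$ rows are mixed.
\item If $p=2$ with $c$, the count is $\mathrm{wt}(G_1\text{ row})+\mathrm{wt}(H_1\text{ row})$, which is odd plus odd, for a top row. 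It is an even weight plus $0$ for a $G_0$ row, and $0$ plus an even weight for an $H_0$ row. In all cases it is even.
\end{itemize}

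\textbf{Distance.} Write an error pattern as $(a,b,b')$ over the three blocks and set $u=b+b'$. It is undetected iff three conditions hold: $G_0a=0$, $H_0u=0$, and $|a|+|b'|\equiv 0 \pmod 2$ (the syndrome of $c$). It acts nontrivially iff $G_1a+H_1u\neq 0$. The argument splits into two cases.
\begin{itemize}
\item If $H_1u\neq 0$, then $u$ is an undetected logical error of $H$. Hence $|b|+|b'|\ge |u|\ge d(H)\ge d+1$.
\item Otherwise $G_1a\neq 0$ and $G_0a=0$, so $|a|\ge d$. If $|a|=d$, then $|a|$ is odd because $d$ is odd, so the $c$-parity forces $|b'|$ to be odd, hence $|b'|\ge 1$.
\end{itemize}
Either way the total weight is at least $d+1$. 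The support count $k+1+s(G_0)+s(H_0)=s(G)+s(H)-k+1$ and the depth $n(G)+2n(H)$ are read off directly.

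\textbf{Recycling.} This is the step I expect to be the main obstacle, because the $k$ output qubits are shared between the $G$ part and the $H$ part. Naively running a recycled schedule of $G$ (outputs initialized late, then alive at the end) and then one of $H$ would force the outputs to be alive during the whole $H$ schedule, which $s^\star(H)$ does not account for. To avoid this I would do two things.
\begin{enumerate}
\item Mimic the $k=1$ trick and add the even row $c$ to every top row. This preserves the protocol and turns the top rows into $[0\,|\,H_1\,|\,H_1+\mathbf{1}]$. Since all rotations are diagonal and commute, the first block then acts only on $c$ and the $G_0$ rows. It can be run with a recycled schedule of $G$ in which the role of the outputs is taken by the single qubit $c$, using at most $s^\star(G)+1$ qubits, after which the $G_0$ checks are measured and freed.
\item Run the last two blocks by interleaving them: apply column $j$ of block two and then column $j$ of block three. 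These two columns have the same support on the top and $H_0$ rows; they differ only on $c$, which stays alive, and by the all-ones pattern on the top rows, which does not change which qubits are active. So the recycled schedule of $H$, including its late initialization of outputs and mid-way measurement of $H_0$ checks, can be followed verbatim with the one extra qubit $c$. This costs $s^\star(H)+1$.
\end{enumerate}
Taking the maximum over the two phases gives $s^\star(G')\le\max(s^\star(G)+1,s^\star(H)+1)$. The point I would verify carefully is that the modified top rows $H_1+\mathbf{1}$ in block three do not force any extra qubit to be alive beyond those already active in column $j$ of $H$'s schedule.
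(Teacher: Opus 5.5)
Your parity, orthogonality, support and depth arguments are correct. The distance argument, which splits on whether $H_1u\neq 0$ for $u=b+b'$, is a cleaner and more complete version of the paper's informal block-by-block count. The gap is in the recycling bound, and it starts with the identity in your step~1.

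Adding $c=[\mathbf{1}\,|\,\mathbf{0}\,|\,\mathbf{1}]$ to a top row $[g\,|\,h\,|\,h]$ gives $[g+\mathbf{1}\,|\,h\,|\,h+\mathbf{1}]$. Its first block vanishes only if $g$ is the all-ones row. That is not a hypothesis, and for $k\ge 2$ it cannot hold for every top row: two all-ones rows of odd length $n(G)$ overlap in an odd number of columns, which violates weak orthogonality of $G$. For $G=I_k$, which is exactly the input of the $S^{(r,2,k)}$ family, adding $\mathbf{1}$ to each row gives the complement of the identity. So after your row operation, block~1 still acts on the outputs, and it is not a run of $G$ with $c$ standing in for them. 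The row-addition trick works in Theorem~\ref{thm:code_doubling}, and in the second schedule of Theorem~\ref{thm:code_doubling_3}, because there the extra check row's first block equals the output row's first block ($G_1$, respectively $(G_1)_0$). Here that first block is $\mathbf{1}$.

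Step~2 also fails, and this is the point you set aside to verify. With top rows $[H_1\,|\,H_1+\mathbf{1}]$ on blocks~2--3, output $i$ is hit in column $j$ of block~2 if and only if $(H_1)_{ij}=1$, and in column $j$ of block~3 if and only if $(H_1)_{ij}=0$. So every paired column touches every output. All $k$ outputs are then live from the first pair on, which destroys whatever late initialization $s^\star(H)$ relied on.

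The paper does no row addition. It keeps $[G_1\,|\,H_1\,|\,H_1]$, runs block~1 on a recycled schedule of $G$ plus the qubit $c$, measures the $G_0$ checks, then runs blocks~2--3 on a schedule of $H$ plus $c$. Without the row addition, your column pairing is the right way to make that second phase precise, since paired columns then differ only on $c$.

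Your worry that the outputs must stay live across both phases is genuine, and the paper's one-line count does not address it either. Made rigorous, that schedule gives $s^\star(G')\le\max(s^\star(G)+1,\,s^\star(H_0)+k+1)$, where $s^\star(H_0)$ is the peak number of simultaneously live $H_0$ checks. Swapping the phases gives $\max(s^\star(H)+1,\,s^\star(G_0)+k+1)$. These are the same kind of count the paper itself uses in the main-text recursion (two extra qubits) and in Theorem~\ref{thm:code_doubling_3} (three extra). Getting exactly $\max(s^\star(G)+1,s^\star(H)+1)$ needs an extra argument or hypothesis, for example an optimal schedule for the second phase that does not delay the outputs.
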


\noindent The proof of this result is detailed in Appendix~\ref{app:thm2}, and follows from similar steps to the one of \Cref{thm:code_doubling}. For the recycled support bound, note that the check qubits involved in the $G_0$ term can be measured and reset. The upper bound then comes from a direct analysis of the recycled support for the first column block and the last two column blocks.

This construction is particularly relevant to produce $d=2$ distillation protocols. In particular, starting from the trivial identity matrix $G := I_k$ of size $k\times k$, one can construct a weakly $r$-orthogonal code of distance $d=2$ by leveraging appropriate weakly $r-1$-orthogonal with $k$ outputs. For instance, for even $k$, one can use a recursive construction starting from the $1$-orthogonal family
\begin{equation}
    S^{(1,2,k)} = \begin{bmatrix}
        \multicolumn{3}{c}{I_k} \\ \hline 
        G_0^{(1,2)} & \cdots & G_0^{(1,2)}
    \end{bmatrix},
\end{equation}
and construct recursively weakly $r$-orthogonal matrices as 
\begin{equation}
    S^{(r+1,2,k)} = \begin{bmatrix}
        I_k & S_1^{(r,2,k)} & S_1^{(r,2,k)} \\
        \mathbf{1} & \mathbf{0}& \mathbf{1}\\ \hline
        0 &  S_0^{(r,2,k)} &  S_0^{(r,2,k)} 
    \end{bmatrix}.
\end{equation}
This family correspond to distillation protocols for $\ket{(Z)^{1/(r-1)}}$ states of parameters $[(2^r-1)k,k,2,k+r+1]$. For $r=2$ and $r=3$, this family contains the minimal depth matrices for $k=2$~\cite{NezamiHaah2021}, but is suboptimal afterwards.

Another construction we present here concerns protocols producing two outputs, i.e.~$k(G)=2$. 
\begin{restatable}{Theorem}{THMThree} \label{thm:code_doubling_3}
    Consider a weakly $r$-orthogonal matrix $G = \begin{bmatrix}
    G_1 \\ \hline G_0
\end{bmatrix}$ of even distance $d$ with $k(G)=2$. Denoting $(G_1)_i$ for $i=0,1$ the rows of $G_1$, we assume that $(G_1)_0 + (G_1)_1 = \mathbf{1}$. Then, consider any weakly ($r-1$)-orthogonal matrix $H=\begin{bmatrix} H_1 \\ \hline H_0 
\end{bmatrix}$ of odd depth $n(H)$ and distance at least $d+1$. It holds that the following matrix is weakly $r$-orthogonal 
\begin{equation}
    G' = \begin{bmatrix}
        \multirow{2}{*}{$G_1$} & H_1 & H_1 \\ &  \mathbf{0} & \mathbf{0}
        \\\hline
        (G_1)_0 & \mathbf{0} & \mathbf{1} \\
        G_0 & 0 & 0 \\ 
        0 & H_0 & H_0
    \end{bmatrix}, 
\end{equation}
and has distance at least $d+1$. The support of the new matrix is $s(G') = s(G) + s(H) - 1$ while its depth is $n(G') = n(G) + 2n(H)$. The recycled qubit support of the new protocol can be upper bounded as $s^\star(G') \leq \min[\max(s^\star(G), s^\star(H_0)+3), \max(s^\star(G)+1, s^\star(H)+1)]$. 
\end{restatable}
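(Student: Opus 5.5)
The plan is to verify the two claims (weak $r$-orthogonality and distance) by a direct case analysis on the three column blocks of $G'$, following the proof of \Cref{thm:code_doubling}. The parameter counts and recycled-support bounds then follow by explicit scheduling. Throughout, $H$ is taken to have a single output row $H_1$ of odd weight, and $(G_1)_0,(G_1)_1$ are odd-weight rows. The hypothesis $(G_1)_0+(G_1)_1=\mathbf{1}$ implies that these two rows have disjoint supports.

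\textbf{Orthogonality.} Take $p\in[2,r]$ distinct rows of $G'$ and count, block by block, the columns where all of them are $1$. In blocks two and three, a row that does not involve the new check row $c=[(G_1)_0,\mathbf{0},\mathbf{1}]$ restricts to identical vectors in both blocks. So if $c$ is not among the chosen rows, the contributions of blocks two and three cancel mod $2$. Block one then gives the overlap of at most $r$ rows of $G$, which is even by the weak $r$-orthogonality of $G$.

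If $c$ is chosen, blocks two and three contribute the overlap, inside block three, of the remaining $p-1$ rows restricted to $H$.
\begin{itemize}
\item If $p-1\ge 2$, this is even by the weak $(r-1)$-orthogonality of $H$, or it vanishes if $(G_1)_1$ or a $G_0$ row is involved.
\item If $p-1=1$, it is the weight of a single $H$ row: even for $H_0$ rows, odd for $H_1$.
\end{itemize}
In block one, $c$ acts as a copy of $(G_1)_0$, so the block-one term is the overlap of $(G_1)_0$ with the other $p-1\le r-1$ rows of $G$. Since $(G_1)_0$ squares to itself, this is at most $r$ rows of $G$ and hence even.

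There are three exceptional pairs, each checked by hand.
\begin{itemize}
\item $\{c,\text{output}_0\}$: the count is $|(G_1)_0|+|H_1|$, which is odd $+$ odd, hence even.
\item $\{c,\text{output}_1\}$: the count is $(G_1)_0\cdot(G_1)_1=0$, by disjointness.
\item $\{c, h\}$ with $h$ an $H_0$ row: the count is $|h|$, which is even.
\end{itemize}

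\textbf{Distance.} Let $E=(e_1,e_2,e_3)$ be an undetected error that flips some output.
\begin{itemize}
\item Undetectability on $G_0$ means $e_1$ has zero $G_0$-syndrome.
\item Undetectability on $H_0$ means $e_2+e_3$ has zero $H_0$-syndrome.
\item Undetectability on $c$ means $(G_1)_0\cdot e_1+|e_3|\equiv 0$.
\end{itemize}
First case: $H_1\cdot(e_2+e_3)=1$. Then $|E|\ge|e_2|+|e_3|\ge|e_2+e_3|\ge d+1$ by the distance of $H$.

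Second case: $H_1\cdot(e_2+e_3)=0$. Then some output is flipped by $e_1$ alone, so $|e_1|\ge d$.
\begin{itemize}
\item If $(G_1)_0\cdot e_1=1$, the check parity forces $|e_3|$ odd, so $|E|\ge d+1$.
\item Otherwise $(G_1)_1\cdot e_1=1$ and $(G_1)_0\cdot e_1=0$. Using $(G_1)_0+(G_1)_1=\mathbf{1}$ gives $|e_1|\equiv 1$. Since $d$ is even and $|e_1|\ge d$, we get $|e_1|\ge d+1$.
\end{itemize}
This last sub-case is the step I expect to be the crux. It is exactly where both extra hypotheses, even $d$ and complementary output rows, are consumed.

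\textbf{Parameters and recycling.} The depth $n(G)+2n(H)$ is immediate. The support is obtained by counting rows: the output row of $H$ is merged into the first output row of $G$, and one new check row is added. I would double-check this count against the stated $s(G)+s(H)-1$, which depends on the row-counting convention.

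The second term in the recycled-support bound is obtained as in \Cref{thm:code_doubling_2}.

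For the first term, I would add $c$ to the first output row. This turns that row into $[\mathbf{0},H_1,H_1+\mathbf{1}]$ and preserves orthogonality, so the output is initialized only after block one. During block one, $c$ plays the role of $(G_1)_0$, so block one runs on the recycled support $s^\star(G)$. The $G_0$ qubits are then measured and reset. Blocks two and three need at most the recycled support of $H_0$, plus three qubits: the check $c$, the late-initialized output, and the idle second output. This gives $\max(s^\star(G),s^\star(H_0)+3)$.
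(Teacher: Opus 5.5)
Your orthogonality case analysis and your distance argument are correct and follow the paper's route. The bookkeeping is the same block-by-block count, with the pair $\{c,\text{output}_0\}$ saved by odd $+$ odd, and the distance uses the same parity mechanism from $(G_1)_0+(G_1)_1=\mathbf{1}$ and $d$ even. Your distance proof is actually tighter than the paper's, which reasons only about exactly $d$ block-one errors; writing $|e_1|\equiv(G_1)_0\cdot e_1+(G_1)_1\cdot e_1$ handles every weight at once. Your row count is also right: $G'$ has $2+1+(s(G)-2)+(s(H)-1)=s(G)+s(H)$ rows. This, not $s(G)+s(H)-1$, is what the paper's own Table~\ref{tab:k2case} uses, e.g.\ $4+3=7$ at $r=2,d=3$ and $5+4=9$ at $r=3,d=3$. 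So the $-1$ in the statement is a slip, not a matter of row-counting convention.

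Two steps are missing, though.

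First, you never compute the row weights, so the hypothesis that $n(H)$ is odd is never used. It is needed: $|c|=|(G_1)_0|+n(H)$, so $c$ is an even-weight check row only because $n(H)$ is odd. Only then is the displayed output/check partition with $k(G')=2$ valid. Your distance argument presupposes this when it imposes undetectability on $c$. The paper does this weight computation in its step (i).

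Second, the bound $\max(s^\star(G)+1,s^\star(H)+1)$ does not follow ``as in Theorem~\ref{thm:code_doubling_2}''. There $k(H)=k(G)$, so every output is active in blocks two and three. Here the second output has no counterpart in $H$. If block one runs first, that output stays live but idle through blocks two and three, and that phase then needs $s^\star(H_0)+3$ qubits, so you merely recover the other term. The paper instead runs blocks two and three first, with only the first output, $c$ and the $H$ rows live ($s^\star(H)+1$). It then measures the $H_0$ checks and initializes the second output only during block one ($G$'s rows plus $c$). Even there, the first output and $c$ are already live when block one starts. So the $s^\star(G)+1$ count for that phase tacitly assumes that $G$'s recycled schedule still fits in $s^\star(G)$ qubits with its first output live from the start. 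Your derivation of the other term, $\max(s^\star(G),s^\star(H_0)+3)$, matches the paper's.
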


\noindent The proof of this result is given in Appendix~\ref{app:thm3} following similar logic as the other two theorems. The distance property directly follows from the hypothesis on the rows of $G_1$. Indeed, since the rows of $G_1$ are complementary, an even number $d$ of errors can only lead to: (i) an even number of errors on both outputs; (ii) an odd number of errors on both outputs. Case (i) amounts to no output error since $Z^2=I$, while case (ii) leads to an error on the additional check qubit $(G_1)_i$, requiring at least one further error to occur not to be detected. Finally, for the recycled support, adding the even row $(G_1)_0$ to the first row allows to initialize the first output qubit after the first column block. As previously, the check qubits involved in the $G_0$ term can be measured and reset, before the last two column blocks. The upper bound comes from doing the last two column blocks without initializing the second output qubit, measure and reset all qubit involved in $H_0$, and then do the first column block.

Using this, we propose the following construction for matrices of arbitrary distance $d$ in the case $k=2$. The recursion is initialized by the following matrices:
\begin{equation}
    P^{(1,d,2)} = \begin{bmatrix}
        I_2 \\ \hline 
        \multicolumn{1}{c}{G_0^{(1,d)}} 
    \end{bmatrix}, \quad \text{and  } P^{(r,1,2)} = \begin{bmatrix}
        I_2 
    \end{bmatrix},
\end{equation}
for $r\geq 1$. The complete family $P^{(r,d,2)}$ is built by recursion on $r$ and increasing $d$ by $1$ using Theorem~\ref{thm:code_doubling_2} with $H=P^{(r-1,d+1,2)}$ for odd $d$, and Theorem~\ref{thm:code_doubling_3} with $H=G^{(r-1,d+1)}$ for even $d$. This construction allows to obtain a family of distillation protocols producing $2$ output $\ket{(Z)^{1/2^{(r-1)}}}$ states, with depth and support scaling as $d^r$. 

Again, the key observation here is that the construction allows for efficient qubit recycling. The fact that, when going from odd to even distances, a $1$ increment in the recycled support can happen means that the recycled qubit support of these protocols can only be upper bounded by
\begin{equation}
    s^\star(P^{(r,d,2)}) \leq d^{r-1}. 
\end{equation}
However, for small parameter value, efficient reduction of the qubit count can still be achieved. A list of the codes obtained with this construction is given in \Cref{tab:k2case} for small parameters $r,d$. Note that for $r=3$, the obtained protocols have optimal depth, at least up to $d=3$~\cite{NezamiHaah2021}.

\begin{table}[h] 
    \centering 
    \renewcommand{\arraystretch}{1.25} \setlength{\tabcolsep}{10pt} 
    \begin{tabular}{@{}cccc@{}} \toprule \textbf{Support} $s$ & \textbf{Recycled} $s^\star$ & \textbf{Depth} $n$ & \textbf{Distance} $d$ \\ 
    \midrule \midrule
    \multicolumn{4}{c}{$r=2$} \\
    \midrule 4 & 4 & 6 & 2 \\ 7 & 4 & 12 & 3 \\ 11 & 5 & 20 & 4 \\ 16 & 5 & 30 & 5 \\
    \midrule 
    \multicolumn{4}{c}{$r=3$} \\
    \midrule
    5 & 5 & 14 & 2 \\ 9 & 5 & 28 & 3 \\ 19 & $\leq$ 6 & 68 & 4 \\ 28 &  $\leq$ 7 & 102 & 5 \\ 
    \midrule 
    \multicolumn{4}{c}{$r=4$} \\
    \midrule
    6 & 6 & 30 & 2 \\ 11 & $\leq$ 7 & 60 & 3 \\ 29 &  $\leq$ 8 & 196 & 4 \\ 
    \bottomrule 
    \end{tabular}
    \caption{Parameters of the 2-outputs distillation protocols from the recursive family $P^{(r,d,2)}$ for small values of $r$ and $d$. For the recycled qubit value $s^ \star$, an exact value means that is was certified optimal using the numerical method presented in Ref.~\cite{Jacinto26} ; upper bounds are obtained with the code doubling construction of Theorems~\ref{thm:code_doubling_2} and \ref{thm:code_doubling_3}.}
    \label{tab:k2case}
\end{table}

\section{Discussion and outlooks}
In this work, we show that $r$-orthogonal matrices can be constructed in ways that allow for efficient reduction of the qubit support using qubit recycling. As such, we provide a family of distillation protocol that produces one output state belonging to the $r$-th level of the Clifford hierarchy at arbitrary distance, and that can be effectively performed using a number of qubit that doesn't depend on $d$. Such protocols with very small active qubits appear particularly relevant when considering architectures with limited connectivity at the logical level. This was made possible by introducing a specific base layer of $1$-orthogonal matrix and leveraging a code doubling construction. We also discussed possible generalization to protocols having more than one output. Specifically, we introduced two new code doubling constructions and used them to family of codes, respectively of $k$ outputs and distance $2$; and $2$ outputs with arbitrary distance $d$. However, in the latter case, we were only able to show a linear gain on the qubit support when using qubit recycling. A rather natural question thus arise: is it possible to construct family of codes for arbitrary $d$ that have constant size recycled qubit support for any number of output $k$?

Note that recently, $d=2$ codes have been deeply explored in Ref.~\cite{Singh26}, even for states that are not diagonal, such as $\ket{CCZ}$ and $\ket{CS}$ states. It would be interesting to know which of these protocols can be obtained from similar constructions as the ones presented here, as they also rely on matrices with particular orthogonality properties.

Finally, our work shows that comparing different distillation protocols is a tricky task, as both the depth and the qubit support are involved in the resource overhead. We did not study here other efficient techniques to reduce the overhead, such as \emph{parallelization} of gates in the distillation sequence~\cite{Litinski2019}, or catalytic protocols, reusing some of the check qubits other distillation rounds~\cite{Singh26}. Future lines of work could include finding relevant construction tailored to reducing both depth and support using parallelization and recycling.

\begin{acknowledgements}
We thank Xavier Valcarce for the precious feedback on the manuscript. We acknowledge funding by the Swiss National Science Foundations (project 219366). Disclaimer: in opposite fashion of a recent single-author paper trend, the scientific results of this paper came from the brain of the author, looking at the sea for numerous hours, and not from some LLM. 
\end{acknowledgements}

\bibliographystyle{apsrev4-2}
\bibliography{biblio}{}


\newpage 

\appendix
\onecolumngrid
\newpage 

\begin{center}
    \Large \textbf{Supplemental Materials}
\end{center}

\vspace{30pt}

We start this supplemental section by laying out some definitions and easy-for-the-reader-to-check propositions that will come in handy later.

\begin{Definition}
    For any set of binary vectors $(g_i)\in \mathbb{F}_2^n$, we define the following star product:
    \begin{equation}
        \star(g_1,\dots, g_n) = \sum_{i=1}^n (g_1)_i \cdots (g_n)_i.
    \end{equation}
\end{Definition}

\begin{Proposition}
    We can rewrite the (weak) $r$-orthogonality of a matrix $G$ as respectively:
    \begin{equation}
        \forall p\in[2,r],\quad \forall i_1\neq\cdots\neq i_p\in[s], \quad \star(g_{i_1},\dots, g_{i_p}) \equiv 0 \mod{2^{r-p+1}},
    \end{equation}
    and 
    \begin{equation}
        \forall p\in[2,r],\quad \forall i_1\neq\cdots\neq i_p\in[s], \quad \star(g_{i_1},\dots, g_{i_p}) \equiv 0 \mod{2},
    \end{equation}
    where $g_1,\dots,g_n$ are the rows of $G$.
\end{Proposition}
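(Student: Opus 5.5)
The plan is simply to unfold the definitions, since the Proposition is a notational restatement of Eqs.~\eqref{eq:rortho} and~\eqref{eq:weakrortho}. Write $g_1,\dots,g_s\in\mathbb{F}_2^n$ for the rows of $G$, so that $(g_i)_k=G_{i,k}$. The definition of the star product overloads the letter $n$: it is used both for the number of vectors and for their length. I will therefore first restate it with $p$ vectors of length $n$, namely $\star(g_{i_1},\dots,g_{i_p})=\sum_{k=1}^{n}(g_{i_1})_k\cdots(g_{i_p})_k$. This is the only form in which the star product enters the Proposition.

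The one point requiring care is where the sum is evaluated. For the strong condition modulo $2^{r-p+1}$ to make sense, the sum must be taken in $\mathbb{Z}$, not in $\mathbb{F}_2$. I will therefore view the entries $0,1\in\mathbb{F}_2$ as the integers $0,1$. Each product $(g_{i_1})_k\cdots(g_{i_p})_k$ is then the integer $1$ exactly when all $p$ rows have support in column $k$, and $0$ otherwise. The integer $\star(g_{i_1},\dots,g_{i_p})$ is thus precisely the number of columns in which all $p$ rows are supported, which is the quantity $\sum_{k=1}^n G_{i_1,k}\cdots G_{i_p,k}$ appearing in Eq.~\eqref{eq:rortho}.

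Substituting this identity term by term into Eq.~\eqref{eq:rortho}, with the same quantifiers over $p\in[2,r]$ and distinct indices $i_1,\dots,i_p\in[s]$, gives the first claimed formulation. Reducing the same identity modulo $2$ in Eq.~\eqref{eq:weakrortho} gives the second. For the weak version the distinction between integer and $\mathbb{F}_2$ summation disappears, because reduction modulo $2$ commutes with summation, so either reading of the star product works there.

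I expect no genuine obstacle. The only potential pitfall is the implicit convention that the star product is an integer-valued count rather than an $\mathbb{F}_2$-valued sum, and I will state this convention explicitly at the outset.
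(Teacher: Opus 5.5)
Your proposal is correct and is the same direct unfolding of definitions the paper intends; the paper gives no proof and lists this among the ``easy-for-the-reader-to-check'' facts. Your explicit convention that $\star$ is an integer-valued count, together with fixing the overloaded $n$ and reading the rows as $g_1,\dots,g_s$, is exactly what makes the strong condition modulo $2^{r-p+1}$ meaningful, since the paper's later ``all operations modulo 2'' convention is harmless only for the weak version.
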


\begin{Definition}
    For a binary vector $g\in \mathbb{F}_2^k$ over the binary field, we denote $\text{wt}(g) \equiv \sum_{i}^k g_i \mod{2}$ the Hamming weight of the vector $g$. 
\end{Definition}

\begin{Definition}
    For two binary vectors $g\in \mathbb{F}_2^k$ and $h\in \mathbb{F}_2^{k'}$ over the binary field, we denote $g \oplus h = (g_1,\dots,g_k, h_1, \dots, h_{k'})$ the contraction of $g$ and $h$. 
\end{Definition}

\begin{Proposition}
    For two binary vectors $g\in \mathbb{F}_2^k$ and $h\in \mathbb{F}_2^{k'}$ over the binary field, it holds that $\text{wt}(g \oplus h)\equiv \text{wt}(g) + \text{wt}(h) \mod{2}$
\end{Proposition}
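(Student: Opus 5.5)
The plan is to unfold the definition of $\text{wt}$ and use the fact that summation over the index set of the concatenated vector splits into two disjoint blocks. Concretely, the concatenation $g \oplus h$ is a vector in $\mathbb{F}_2^{k+k'}$ whose first $k$ coordinates are those of $g$ and whose last $k'$ coordinates are those of $h$. In other words, $(g\oplus h)_i = g_i$ for $1\leq i\leq k$ and $(g\oplus h)_{k+j} = h_j$ for $1\leq j\leq k'$. I would begin by stating this coordinate description explicitly, since it is the only property of $\oplus$ the argument uses.

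Next I would write out the weight as a sum and split it at the index $k$:
\begin{equation}
\text{wt}(g\oplus h) \equiv \sum_{i=1}^{k+k'} (g\oplus h)_i \equiv \sum_{i=1}^{k} g_i + \sum_{j=1}^{k'} h_j \pmod{2}.
\end{equation}
The right-hand side is, by definition, $\text{wt}(g) + \text{wt}(h) \pmod 2$. The only point requiring a moment of care is that the sum is computed over the integers (or directly in $\mathbb{F}_2$) and reduced modulo $2$ at the end. Since reduction modulo $2$ is a ring homomorphism $\mathbb{Z}\to\mathbb{F}_2$, it commutes with addition, so it does not matter whether each partial sum is reduced before or after adding them. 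This also lets one read the statement as an equality in $\mathbb{F}_2$, which is how it will be used later, for instance to track the parity of rows such as $G_1 \oplus H_1 \oplus H_1$ in the constructions of \cref{thm:code_doubling,thm:code_doubling_2,thm:code_doubling_3}.

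There is no genuine obstacle here. The only step worth flagging is the bookkeeping of the re-indexing $i = k + j$ on the second block, which is immediate once the coordinate description of $\oplus$ is fixed. As an optional remark, an induction on the number of blocks extends the identity to arbitrary finite concatenations, $\text{wt}(g^{(1)}\oplus\cdots\oplus g^{(m)}) \equiv \sum_{l=1}^m \text{wt}(g^{(l)}) \pmod 2$, which is the form invoked when analysing the three column blocks of $G'$.
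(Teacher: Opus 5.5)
Your proof is correct and matches the paper's approach: the paper gives no written proof, listing this among its ``easy-for-the-reader-to-check'' propositions, and splitting the coordinate sum of $g\oplus h$ at index $k$ is the direct verification it has in mind. Your extension to several blocks by induction is also exactly how the identity is used in the row-weight computations for $G'$.
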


\begin{Proposition} \label{prop:starplus}
    For any vectors $g,g',h,h'$ over the binary field it holds that $\star(g\oplus h, g'\oplus h') = \star(g, g') + \star(h, h')$.
\end{Proposition}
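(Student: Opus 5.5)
The proof is a direct verification from the definition of the star product and concatenation. Write $g = g_{(1)}\oplus g_{(2)}$ with $g_{(1)}=g$, $g_{(2)}=h$ of lengths $k$ and $k'$, and likewise for $g'\oplus h'$. By definition, $(g\oplus h)_i = g_i$ for $i\le k$ and $(g\oplus h)_i = h_{i-k}$ for $k<i\le k+k'$. The same holds for $g'\oplus h'$, provided $g,g'$ have a common length $k$ and $h,h'$ have a common length $k'$. This is implicit in the statement, since the star product is only defined for vectors of equal length.

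The computation then has two steps.

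\emph{Step 1.} Expand
\begin{equation}
\star(g\oplus h, g'\oplus h') = \sum_{i=1}^{k+k'} (g\oplus h)_i\,(g'\oplus h')_i .
\end{equation}

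\emph{Step 2.} Split the index range into $\{1,\dots,k\}$ and $\{k+1,\dots,k+k'\}$. On the first block each summand is $g_i g'_i$, so it contributes $\star(g,g')$. On the second block, reindex by $j=i-k$; each summand is $h_j h'_j$, so it contributes $\star(h,h')$.

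The sum in the star product is taken over the integers, not only mod $2$, and the entries are $0/1$-valued. The identity therefore holds as an integer equality, which is what the higher-modulus ($2^{r-p+1}$) orthogonality conditions require.

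The same splitting argument extends verbatim to $p$-fold products, giving $\star(g_1\oplus h_1,\dots,g_p\oplus h_p)=\star(g_1,\dots,g_p)+\star(h_1,\dots,h_p)$. This is the form that will be used in the appendix proofs of \Cref{thm:code_doubling}, \Cref{thm:code_doubling_2} and \Cref{thm:code_doubling_3}, where rows of $G'$ are concatenations of three blocks. Associativity of $\oplus$ handles three or more blocks by induction.

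No step is a genuine obstacle. The only point requiring care is the length-compatibility convention, together with the observation that the star product is computed in $\mathbb{Z}$ rather than $\mathbb{F}_2$.
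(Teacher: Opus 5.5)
Your proof is correct and is the same direct index-splitting verification the paper has in mind; the paper lists this proposition among its ``easy-for-the-reader-to-check'' facts and gives no written proof. Your refinements are accurate: the identity holds over $\mathbb{Z}$, as the mod-$2^{r-p+1}$ conditions require, and it extends to $p$-fold products of concatenations, which is the form the appendix actually invokes. The only blemish is your opening line, where ``$g = g_{(1)}\oplus g_{(2)}$'' should read ``$g\oplus h = g_{(1)}\oplus g_{(2)}$''.
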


\begin{Proposition} \label{prop:starzerosone}
    For any vectors $g_1,\dots,g_n$ over the binary field it holds that $\star(1,g_1,\dots,g_n) = \star(g_1,\dots,g_n)$, $\star(0,\dots,0,g_1,\dots,g_n)=0$ and $\star(g_i,g_i)=\star(g_i)$.
\end{Proposition}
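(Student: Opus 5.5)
The statement to prove is Proposition~\ref{prop:starzerosone}: the three identities $\star(1,g_1,\dots,g_n)=\star(g_1,\dots,g_n)$, $\star(0,\dots,0,g_1,\dots,g_n)=0$ and $\star(g_i,g_i)=\star(g_i)$. Each follows from unfolding the definition of the star product as a sum over coordinates of products of entries. Here $1$ and $0$ denote the all-ones and all-zeros vectors of the common length $m$, so all vectors have matching length. The star product of a single vector $g$ is then $\sum_k g_k$, i.e.\ the number of ones in $g$, read modulo whatever power of two is relevant. I work with the integer-valued sum throughout, because $r$-orthogonality needs congruences modulo $2^{r-p+1}$ and not only modulo $2$.

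The plan is a coordinatewise argument. Fix a coordinate $k\in[m]$ and look at the summand $1_k\,(g_1)_k\cdots(g_n)_k$. Since $1_k=1$, this summand equals $(g_1)_k\cdots(g_n)_k$; summing over $k$ gives the first identity. For the second identity, the summand contains a factor $0_k=0$, so every term vanishes and the sum is $0$. This holds whether one or several zero vectors are present. For the third identity, each entry is binary, so $(g_i)_k^2=(g_i)_k$ for $(g_i)_k\in\{0,1\}$. Hence $\sum_k (g_i)_k(g_i)_k=\sum_k (g_i)_k$, which is $\star(g_i)$.

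The same idempotence argument shows more generally that repeated arguments can be collapsed: $\star(g,g,h_1,\dots)=\star(g,h_1,\dots)$. This is the form in which the proposition will be used later in the appendix proofs. There, rows like $\mathbf{1}$ and the duplicated blocks $G_1$, $H_1$ appear together with the concatenation identity of Proposition~\ref{prop:starplus}.

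There is no real obstacle here. The only point needing care is interpretation. Entries must be multiplied as integers $0/1$ and the products summed over $\mathbb{Z}$, not reduced in $\mathbb{F}_2$ before summing, so that the identities hold as integer equalities. In addition, the lengths of the constant vectors must match the $g_i$. Both are settled by stating these conventions at the outset.
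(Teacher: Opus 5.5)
Your coordinatewise argument is correct and is exactly the check the paper intends. The paper gives no written proof, listing this among its ``easy-for-the-reader-to-check'' propositions, and the intended verification is the same termwise use of $1_k=1$, $0_k=0$ and $x^2=x$ for $x\in\{0,1\}$. Your caveat about summing over $\mathbb{Z}$ rather than $\mathbb{F}_2$ is harmless but unnecessary: the identities hold summand by summand, so they hold in either setting, and the paper works modulo $2$ from that point on anyway.
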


\noindent From now on, all operations will be modulo 2. For a matrix $G\in \mathbb{F}_2^{s \times n}$ over the binary field, we will identically use $G$ as a represent of the rows of $G$ when a given property holds for every possible choice of row. For instance $\text{wt}(G)=0$ means that every row of $G$ as even weight.

\section{Proof of Theorem~\ref{thm:code_doubling}} \label{app:thm1}

\THMOne*

\begin{proof}
    We prove the following: (i) $G'$'s first row has odd weight and all other have even weight ; (ii) the matrix $G'$ is weakly $r$-orthogonal ; (iii) the distance of the matrix $G'$ is $d+2$. \\
    \textbf{\underline{(i):}} Let's compute the weight of each row:
    \begin{equation}
        \text{wt}(G') = \text{wt} \begin{bmatrix}
        G_1 & \oplus & H_1 & \oplus & H_1 \\ \hline
        G_1 & \oplus & \mathbf{0} & \oplus & \mathbf{1} \\
        G_0 &\oplus & 0 &\oplus & 0 \\ 
        0 &\oplus & H_0 &\oplus & H_0
    \end{bmatrix} = \begin{bmatrix}
        \text{wt}(G_1) + 2 \text{wt}(H_1) \\ \hline
        \text{wt}(G_1) + n(H) \\
        \text{wt}(G_0) \\ 
        2\text{wt}(H_0) 
    \end{bmatrix} = \begin{bmatrix}
        1 \\ \hline
        0 \\ 0 \\ 0 
    \end{bmatrix},
    \end{equation}
    where we used that $n(H)$ is odd. \\
    \textbf{\underline{(ii):}} Let $p\in[2,r]$. To compute the star product of $p$ rows, we distinguish 4 cases:
    \begin{itemize}
        \item All vectors belong to third and forth block rows, and we write, with $p_3 \in [0,p]$:
        \begin{equation}
        \begin{split}
            \star(\underset{p_3}{\underbrace{(G_0 \oplus 0\oplus 0),\dots,(G_0\oplus 0\oplus 0)}},\underset{p-p_3}{\underbrace{(0\oplus H_0\oplus H_0, \dots,0\oplus H_0\oplus H_0)}} ) & = \star(\underset{p_3}{\underbrace{G_0 ,\dots,G_0}},\underset{p-p_3}{\underbrace{0,\dots,0}}) + 2\left(\star(0,\dots,0,H_0 ,\dots,H_0)\right),\\
            & = \delta_{p-p_3=0}\star(\underset{p}{\underbrace{G_0 ,\dots,G_0}}),
        \end{split}
        \end{equation}
        where we used \Cref{prop:starplus} and \ref{prop:starzerosone}, and that everything is mod 2. We are left with $\star(G_0 ,\dots,G_0)$ which is also $0$ since $p\in[2,r]$ and $G$ is weakly $r$-orthogonal by hypothesis.
        \item A vector of the first row is included, and we write with $p_3 \in [0,p-1]$:
        \begin{equation}
        \begin{split}
            \star(G_1 \oplus H_1 \oplus H_1, & \underset{p_3} {\underbrace{(G_0 \oplus 0\oplus 0),\dots,(G_0\oplus 0\oplus 0)}},\underset{p-1-p_3}{\underbrace{(0\oplus H_0\oplus H_0, \dots,0\oplus H_0\oplus H_0)}} ) \\
            & = \delta_{p-1-p_3=0}\star(G_1,\underset{p-1}{\underbrace{G_0 ,\dots,G_0}}) + 2\delta_{p_3=0}\left(\star(H_1, H_0 ,\dots,H_0)\right)=0,
        \end{split}
        \end{equation}
        using that $(p-1)+1=p \in [2,r]$ and $G$ is weakly $r$-orthogonal.
        \item A vector of the second row is included, and we write with $p_3 \in [0,p-1]$:
        \begin{equation}
        \begin{split}
            \star(G_1 \oplus \mathbf{0} \oplus \mathbf{1}, & \underset{p_3} {\underbrace{(G_0 \oplus 0\oplus 0),\dots,(G_0\oplus 0\oplus 0)}},\underset{p-1-p_3}{\underbrace{(0\oplus H_0\oplus H_0, \dots,0\oplus H_0\oplus H_0)}} ) \\
            & = \star(G_1,\underset{p_3}{\underbrace{G_0 ,\dots,G_0}},0,\dots,0) + \star(\mathbf{0},0,\dots,0,\underset{p-1-p_3}{\underbrace{H_0 ,\dots,H_0}}) + \star(\mathbf{1},0,\dots,0,\underset{p-1-p_3}{\underbrace{H_0 ,\dots,H_0}}) \\
            & = \delta_{p-1-p_3=0}\star(G_1,\underset{p-1}{\underbrace{G_0 ,\dots,G_0}}) + \delta_{p_3=0} \star(\underset{p-1}{\underbrace{H_0 ,\dots,H_0}}) = 0,
	\end{split}
        \end{equation}
	where we used \Cref{prop:starzerosone} and that $(p-1)+1=p \in [2,r]$, $p-1 \in[1,r-1]$, $G$ is weakly $r$-orthogonal, $H$ is weakly ($r-1$)-orthogonal, and $\star(H_0)=\text{wt}(H_0)=0$.
        \item One vector of both first and second rows are included, and we write with $p_3 \in [0,p-2]$:
	\begin{equation}
        \begin{split}
            \star(G_1 \oplus H_1 \oplus H_1,& G_1 \oplus \mathbf{0} \oplus \mathbf{1}, \underset{p_3} {\underbrace{(G_0 \oplus 0\oplus 0),\dots,(G_0\oplus 0\oplus 0)}},\underset{p-2-p_3}{\underbrace{(0\oplus H_0\oplus H_0, \dots,0\oplus H_0\oplus H_0)}} ) \\
            & = \star(G_1, G_1,\underset{p_3}{\underbrace{G_0 ,\dots,G_0}},0,\dots,0) + \star(H_1,\mathbf{0},0,\dots,0,\underset{p-2-p_3}{\underbrace{H_0 ,\dots,H_0}}) + \star(H_1,\mathbf{1},0,\dots,0,\underset{p-2-p_3}{\underbrace{H_0 ,\dots,H_0}}) \\
            & = \delta_{p-2-p_3=0}\star(G_1, G_1,\underset{p-2}{\underbrace{G_0 ,\dots,G_0}}) + \delta_{p_3=0} \star(H_1,\underset{p-2}{\underbrace{H_0 ,\dots,H_0}}) = 0, \\
	& = \delta_{p-2-p_3=0}\star(G_1,\underset{p-2}{\underbrace{G_0 ,\dots,G_0}}) + \delta_{p_3=0} \star(H_1,\underset{p-2}{\underbrace{H_0 ,\dots,H_0}}),
	\end{split}
        \end{equation}
	where we used \Cref{prop:starzerosone}. Now three case are to be distinguished as
        \begin{equation}
        = \left\{\begin{split}
           & \star(G_1,\underset{p-2}{\underbrace{G_0 ,\dots,G_0}}) = 0, \quad \text{if } p_3 = p-2 \text{ and } p \in [3,r-2], \\
	&  \star(H_1,\underset{p-2}{\underbrace{H_0 ,\dots,H_0}})= 0, \quad \text{if } p_3 = 0 \text{ and } p \in [3,r-2], \\
	& \star(G_1) +  \star(H_1) = 1+1 = 0, \quad \text{if } p_3 = 0 \text{ and } p=2, \\
	\end{split} \right.
        \end{equation}
where the first two cases utilize the orthogonality of $G$ and $H$ respectively, and the last line comes from the weight of odd rows.
    \end{itemize}
\textbf{\underline{(iii):}} For the distance, we first focus on the first column row. In order to have an undetected error on $G_0$ affecting the outputs, at least $d$ errors are needed. Since the output qubit gets an error, the first check qubit is triggered, unless one more error happen in the third column block. However, if such error happen, one check qubit from the $H_0$ block would trigger as $d(H) =d+2 >1$. One more error must thus happen, totaling at least $d+2$ errors. However, an error can occurs with $d+1$ errors on the first column block. Now, we assume no error happen on the first column block and look at the last two column block. Let's label $J_2$ and $J_3$ the set of columns in which an error occur in column block 2 and 3 respectively. If $j\in J_2 \cap J_3$, then it either produces no errors or a pair of error on the ouput, amounting to no error. As such, we assume that $J_2 \cap J_3=0$. This means that the error pattern takes distinct column of $H_0$ and as such can produce an output error iff at least $d(H)=d+2$ error occur.

Conversely, it holds true that on the second block column, an error pattern of $d+2$ error can lead to an output error as the distance of $H$ is $d+2$, hence the distance is exactly $d+2$.
\end{proof}

\section{Proof of \Cref{thm:code_doubling_2}} \label{app:thm2}

\THMTwo*

\begin{proof}
    The proof resembles highly what's done in the previous section. Namely, we prove the following: (i) $G'$'s first $k(G)$ rows have odd weight and all other have even weight ; (ii) the matrix $G'$ is weakly $r$-orthogonal ; (iii) the distance of the matrix $G'$ is at least $d+1$ ; (iv) the bound on the recycled support.\\
    \textbf{\underline{(i):}} Let's compute the weight of each row:
    \begin{equation}
        \text{wt}(G') = \text{wt} \begin{bmatrix}
        G_1 & \oplus & H_1 & \oplus & H_1 \\ \hline
        G_1 & \oplus & \mathbf{0} & \oplus & \mathbf{1} \\
        G_0 &\oplus & 0 &\oplus & 0 \\ 
        0 &\oplus & H_0 &\oplus & H_0
    \end{bmatrix} = \begin{bmatrix}
        \text{wt}(G_1) + 2 \text{wt}(H_1) \\ \hline
        n(G) + n(H) \\
        \text{wt}(G_0) \\ 
        2\text{wt}(H_0) 
    \end{bmatrix} = \begin{bmatrix}
        1 \\ \hline
        0 \\ 0 \\ 0 
    \end{bmatrix},
    \end{equation}
    where we used that $n(G)$ and $n(H)$ have the same parity.\\
    \textbf{\underline{(ii):}} This point mostly follows similar calculations has in the proof of Theorem~\ref{thm:code_doubling}. For all $p\in[2,r]$, one can distinguish cases, the only relevant case that differs is when $p_1 \geq 1$ vectors from output rows are involved. Then two subcases remain:
    \begin{itemize}
        \item No vector from the first check row is involved and with $p_3 \in [0,p-p_1]$:
        \begin{equation}
        \begin{split}
            \star(& \underset{p_1}{\underbrace{(G_1 \oplus H_1 \oplus H_1), \dots, (G_1 \oplus H_1 \oplus H_1)}} ,\underset{p_3} {\underbrace{(G_0 \oplus 0\oplus 0),\dots,(G_0\oplus 0\oplus 0)}},\underset{p-p_1-p_3}{\underbrace{(0\oplus H_0\oplus H_0, \dots,0\oplus H_0\oplus H_0)}} ) \\
            & = \delta_{p-p_1-p_3=0}\star(\underset{p_1}{\underbrace{G_1, \dots, G_1}}, \underset{p_3} {\underbrace{G_0 ,\dots,G_0}})+ 2 \delta_{p_3=0} \star(H_1, \dots,H_1, H_0, \dots,H_0) =0,
        \end{split}
        \end{equation}
        since by $r$-orthogonality of $G$ the star product of $p_1+p_3=p \in [2,r]$ of its rows is $0$.
        \item The vector from the first check row is involved and with $p_3 \in [0,p-p_1-1]$: 
        \begin{equation}
        \begin{split}
            \star(& \underset{p_1}{\underbrace{(G_1 \oplus H_1 \oplus H_1), \dots, (G_1 \oplus H_1 \oplus H_1)}},(\mathbf{1},\mathbf{0},\mathbf{1}),\underset{p_3} {\underbrace{(G_0 \oplus 0\oplus 0),\dots,(G_0\oplus 0\oplus 0)}},\underset{p-p_1-p_3-1}{\underbrace{(0\oplus H_0\oplus H_0, \dots,0\oplus H_0\oplus H_0)}} ) \\
            & = \delta_{p-p_1-p_3-1=0}\star(\underset{p_1}{\underbrace{G_1, \dots, G_1}}, \mathbf{1},\underset{p_3} {\underbrace{G_0 ,\dots,G_0}})+ \delta_{p_3=0} \star(\underset{p_1}{\underbrace{H_1, \dots, H_1}}, \mathbf{1},\underset{p-p_1-1} {\underbrace{H_0 ,\dots,H_0}})\\
            & = \delta_{p-p_1-p_3-1=0}\star(\underset{p_1}{\underbrace{G_1, \dots, G_1}}, \underset{p_3} {\underbrace{G_0 ,\dots,G_0}})+ \delta_{p_3=0} \star(\underset{p_1}{\underbrace{H_1, \dots, H_1}}, \underset{p-p_1-1} {\underbrace{H_0 ,\dots,H_0}}),
        \end{split}
        \end{equation}
        where we used~\Cref{prop:starzerosone}. Note that since both star produt involve $p-1$ terms, and $p-1 \in [1,r-1]$, the above sum is $0$ by weak $r-1$ orthogonality of $H$ and $G$, unless $p-1=1$, i.e.~$p=2$. If so, $p_3 = 0,1$ and the above term is 
        \begin{equation}
             \left\{\begin{split}
                 & \star(G_1) + \star(H_1) = \text{wt}(G_1) + \text{wt}(H_1)= 1+1= 0, \quad \text{if } p_3=0, \\
                 & \star(G_0) = \text{wt}(G_0) = 0, \quad \text{if } p_3=1, 
             \end{split}\right.
        \end{equation}
    \end{itemize}
    \textbf{\underline{(iii):}} For the distance, we first focus on the first column row. In order to have an undetected error on $G_0$ affecting the outputs, at least $d$ errors are needed. Since $d$ is odd, this means the check qubit is triggered, unless one more error happen, totaling at least $d+1$ errors. Now, we assume no error happen on the first column block and look at the last two column block. Let's label $J_2$ and $J_3$ the set of columns in which an error occur in column block 2 and 3 respectively. If $j\in J_2 \cap J_3$, then it either produces no errors or a pair of error on each ouput, amounting to no error. As such, we assume that $J_2 \cap J_3=0$. This means that the error pattern takes distinct column of $H_0$ and as such can produce an output error iff at least $d(H)\geq d+1$ error occur. \\
    \textbf{\underline{(iv):}} The recycled pattern giving the upper bound is given by:
    \begin{equation} 
    G' = \begin{bmatrix}
         G_1 & & H_1 & H_1 \\ \hline
        \mathbf{1} & &  \mathbf{0} & \mathbf{1} \\
        G_0 & || &  &  \\ 
        &  & H_0 & H_0
    \end{bmatrix},
\end{equation}
The bound is obtained by looking at the first column block / (2 and 3) column block required number of qubits.
\end{proof}

\section{Proof of \Cref{thm:code_doubling_3}} \label{app:thm3}

\THMThree*

\begin{proof}
    The proof resembles highly what's done in the previous two sections. Namely, we prove refer the reader there for (i) the two first rows of $G'$ odd weight and all other have even weight. It remains to prove: (ii) the matrix $G'$ is weakly $r$-orthogonal ; (iii) the distance of the matrix $G'$ is at least $d+1$ ; (iv) the bound on the recycled support.\\
    \textbf{\underline{(ii):}} Let $p\in[2,r]$. Considering what's be done in the other two proofs, we only derive the case where the second output and the additional first check qubit are involved. Namely, for $p_3 \in [0,p-2]$, we write:
    \begin{equation}
        \begin{split}
            \star((G_1)_1 \oplus \mathbf{0} \oplus \mathbf{0} ,& (G_1)_0 \oplus \mathbf{0} \oplus \mathbf{1}, \underset{p_3} {\underbrace{(G_0 \oplus 0\oplus 0),\dots,(G_0\oplus 0\oplus 0)}},\underset{p-2-p_3}{\underbrace{(0\oplus H_0\oplus H_0, \dots,0\oplus H_0\oplus H_0)}} ) \\
            & = \star((G_1)_1, (G_1)_0,\underset{p_3}{\underbrace{G_0 ,\dots,G_0}},\underset{p-2-p_3}{\underbrace{0,\dots,0}}) \\
            & = \delta_{p-2-p_3=0}\star((G_1)_1, (G_1)_0,\underset{p-2}{\underbrace{G_0 ,\dots,G_0}}) = 0,
	\end{split}
        \end{equation}
        as weak $r$-orthogonality of $G$ ensures that the star product of $p-2+2=p\in[2,r]$ terms in $0$.\\
    \textbf{\underline{(iii):}} For the distance, the only difference with the two other proof in on the first column row. Indeed, assume that an error pattern occurs on $G$, then it involves at least $d$ errors, and affect an error to one (or two) of the outputs. However, the key hypothesis $(G_1)_0 + (G_1)_1 = \mathbf{1}$ ensure that each error happening affect one and only one output qubit. Since $d$ is even, it means that they are both either affected by an even number of errors, amounting to no errors, or by an odd number of error. In the later case, the additional check qubit $(G_1)_0$ on the first check row would also be triggered, and the rest of the proof follows like before.\\
    \textbf{\underline{(iv):}} The bound on the recycled support is obtained by minimizing over two different patterns:
    \begin{itemize}
        \item On can permute the columns to start with the last two column block, without initializing the second output qubit:
        \begin{equation}
    G' = \begin{bmatrix}
        H_1 & H_1 & & \multirow{2}{*}{$G_1$} \\ &  &  &  
        \\\hline
         \mathbf{0} & \mathbf{1}& & (G_1)_0 \\
         &  & & G_0 \\ 
         H_0 & H_0 & || & 
    \end{bmatrix}. 
\end{equation}
        A (1+2) / (3) column blocks support minimization in the above allows to bound the recycled support as $s^\star(G^\star) \leq \max(s^\star(H)+1, s^\star(G)+1) $. 
        \item On can add the first check qubit row to the first output qubit to initialize it later: 
        \begin{equation}
    G' = \begin{bmatrix}
         & & H_1 & H_1+\mathbf{1} \\ (G_1)_1 &  & \mathbf{0} & \mathbf{0}
        \\\hline
        (G_1)_0 & & \mathbf{0} & \mathbf{1} \\
        G_0 & || &   &  \\ 
         & & H_0 & H_0
    \end{bmatrix}.
\end{equation}
A (1) / (2+3) column blocks support minimization in the above allows to bound the recycled support as $s^\star(G^\star) \leq \max(s^\star(G), s^\star(H_0)+3) $. 
    \end{itemize}
\end{proof}

\end{document}